\documentclass[journal]{IEEEtran}
\usepackage{amssymb}
\usepackage{amsmath}
\usepackage{bm}
\usepackage{cite}
\usepackage{graphics}
\usepackage{graphicx}
\usepackage{algorithmicx}
\usepackage{algpseudocode}
\usepackage{program}
\usepackage{algorithm}
\usepackage{array}
\usepackage{multirow}
\newcolumntype{P}[1]{>{\centering\arraybackslash}p{#1}}
\newtheorem{thm}{\bf Theorem}

  % Use Input in the format of Algorithm
 % Use Output in the format of Algorithm

\begin{document}

\title{Pareto Optimal Demand Response Based on Energy Costs and Load Factor in Smart Grid}

\author{Wei-Yu~Chiu,~\IEEEmembership{Member,~IEEE}, Jui-Ting Hsieh, and Chia-Ming Chen

\thanks{This work was supported by the Ministry of Science and Technology of Taiwan under Grant MOST 108-2221-E-007-100. (Corresponding author: Wei-Yu Chiu.)}
\thanks{W.-Y. Chiu and C.-M. Chen are with the Department of Electrical Engineering, National Tsing Hua University,
Hsinchu 30013, Taiwan (e-mail: chiuweiyu@gmail.com).}
\thanks{J.-T. Hsieh is with the Department of Electrical Engineering, Yuan Ze University,
Taoyuan 32003, Taiwan.}%

\thanks{\copyright2019 IEEE. Personal use of this material is permitted. Permission from IEEE must be obtained for all other uses, in any current or future media, including reprinting/republishing this material for advertising or promotional purposes, creating new collective works, for resale or redistribution to servers or lists, or reuse of any copyrighted component of this work in other works.}
\thanks{Digital Object Identifier 10.1109/TII.2019.2928520}

}

\maketitle

\begin{abstract}
Demand response for residential users is essential to the realization of modern smart grids.
This paper proposes a multiobjective approach to designing a demand response program that considers the energy costs of residential users and the load factor of the underlying grid. A multiobjective optimization problem (MOP) is formulated and Pareto optimality is adopted. Stochastic search methods of generating feasible values for decision variables are proposed.
Theoretical analysis is performed to show that the proposed methods can effectively generate and preserve feasible points during the solution process,
which comparable methods can hardly achieve.
A multiobjective evolutionary algorithm is developed to solve the MOP, producing a Pareto optimal demand response (PODR) program.
Simulations reveal that the proposed method outperforms the comparable methods in terms of energy costs while producing a satisfying load factor.
The proposed PODR program is able to systematically balance the needs of the grid and residential users.
\end{abstract}

\begin{IEEEkeywords}
Cost minimization, day-ahead pricing, demand response, energy consumption scheduling, EV charging, load factor maximization, Pareto optimality, Pareto optimal demand response.
\end{IEEEkeywords}

\IEEEpeerreviewmaketitle

\section{Introduction}

\IEEEPARstart{I}{n} existing power grids, power plants usually deliver a unidirectional power flow to customers, converting only one-third of the total energy in their fuel into electricity and wasting the heat produced. Twenty percent of a power grid's generation capacity is often used only to cover peak loads that account for approximately five percent of the time \cite{5357331,4787536}. When natural disasters occur, conventional power grids are unable to resist or self-heal. Next-generation power grids, known as smart grids, have been proposed and designed to replace traditional power grids with the aim of reducing transmission loss, generating electricity more efficiently, and resisting or self-healing after natural disasters. A few governments have adopted proactive policies to popularize smart grids and construct related infrastructure. Smart grids are expected to have higher electricity transmission stability, detect faults and self-heal, allow bidirectional energy and information flow between customers and suppliers, and reliably defend against attacks or natural disasters~\cite{6861946,7744778}.

In a smart grid, the real-time energy consumption profiles of residential users are crucial to power suppliers\cite{7466137}. With this information, suppliers or local trading centers can design a dynamic  pricing scheme that strengthens market mechanisms and encourages users to shift their peak loads~\cite{7094306,6376270}.
Power scheduling can then be achieved through the direct minimization of energy consumption costs~\cite{6670131,6693775,6866904,6575202,6476768,5540263,6316164,7093186}.
The change of demand curves in response to pricing signals is termed demand response.

Common pricing models include real-time pricing, day-ahead pricing, time-of-use pricing, and critical-peak pricing models\cite{6462005}. Regardless of which dynamic pricing scheme is employed, suppliers can lower the cost of power generation if users' peak loads are shifted. This can be achieved through the use of demand response programs.
Generally, shifting peak loads is related to increasing the load factor, which is the reciprocal of the peak-to-average ratio~\cite{4652590,4914742}. However, to fully utilize the grid capacity, the load factor is a more appropriate metric.

 Several studies have incorporated  the concept of load factor or  peak-to-average ratio into power scheduling.
 Game theory approaches have been widely used for designing demand response programs\cite{14Fadlullah,10Mohsenian-Rad,15Eksin,12Nguyen}.
 Although under certain conditions Nash equilibrium strategies can be Pareto optimal\cite{15Stephens},
 it is not always the case: payoffs of game players may be improved simultaneously after a Nash equilibrium is attained\cite{15Ma}.
Stochastic optimization such as genetic algorithms has been applied to attain a satisfactory load factor as well\cite{13Zhao}, yielding
 single-objective optimization problems. Some drawbacks may inherit from such single-objective formulations, such as the need of prior decision making for the tradeoff between objectives\cite{J14}.

 Because demand response is closely related to pricing signals, a demand response program can be derived from a pricing scheme\cite{18VENIZELOU}. Kunwar \emph{et al.}\cite{13Kunwar} proposed an
  area-load based pricing scheme for demand side management. The load factor and  energy cost were jointly optimized.
  Given pricing signals, the proposed methodology could produce a promising demand response program addressing the two objectives.
 In contrast with  single-objective optimization, such a multiobjective approach could avoid, for example, the need of heuristic assignment for weighting coefficients and the need of prior decision making for tradeoffs induced by conflicting objectives.
 There are, however, a few important issues that were not covered by\cite{13Kunwar}.
  First, although the approach considered shiftable loads, they were addressed statistically.
  In this case, no integer or discrete decision variables were involved in optimization, but they are important for explicit control of shiftable loads.
  Second, owing to the employed statistical modelling, the impact of electric vehicles (EVs) was not explicitly examined.
  In general, EVs pose different physical constraints and should be distinguished from ordinary home appliances.
  Third, renewable energy sources (RESs) that have been widely adopted in residential communities were not considered.

 Motivated by\cite{13Kunwar}, we propose a multiobjective approach that addresses the load factor and energy consumption costs of residential users in separate dimensions, leading to
 a multiobjective optimization problem (MOP).  Our system models include EVs and RESs.
 Regarding the MOP, power scheduling profiles serve as the decision variables, consisting of continuous and discrete ones.
 For these two types of decision variables, feasible search mechanisms are developed by exploiting constraint and variable structures.
Relevant analysis is provided to show their effectiveness.
  Solving the MOP yields an approximate Pareto set and a Pareto front. A Pareto optimal solution is then selected to indicate how power loads should be adjusted over time, yielding a Pareto optimal demand response (PODR) program. This program can be offered by utilities to residential users.
 The associated algorithms or technologies are implemented in residential homes. The load factor is related to grid reliability for utilities while the energy cost pertains to the participation of residential users.

The main contributions of this study are summarized as follows: First, we examine the grid load factor and energy costs of residential users using a multiobjective framework, and consider a combination of various types of appliances, an EV, and RESs. This combination has not been fully investigated in the literature when multiobjective optimization is applied for residential power scheduling. Second, we propose a few algorithms that generate feasible values for decision variables, leading to a solution method for our MOP.
Owing to physical constraints involving discrete and continuous variables, this MOP can hardly be solved by conventional multiobjective evolutionary algorithms.
Third, numerical simulations demonstrate that our approach can be beneficial to residential users while achieving a satisfying load factor as compared to comparable methods.
Finally, in addition to numerical evaluations,  the proposed algorithms are theoretically analyzed, providing a rigorous foundation for the proposed PODR program.

The remainder of this paper is organized as follows: Section~\ref{sec_rel} discusses related work. Section~\ref{sec_model} describes  mathematical models of various home appliances, the energy consumption costs of residential users, and the load factor of the power grid. The PODR program is proposed in Section~III. Section~IV presents our simulation results involving comparisons between existing demand response strategies. Finally, the paper concludes in Section~V.

\section{Related Work}\label{sec_rel}

This section examines various methods employed to design demand response programs for residential users.
To facilitate discussions, the section is divided into two subsections.
The first subsection presents existing residential demand response methods involving the concept of the load factor but without  explicitly addressing it as an objective.
This includes methods that have the load factor as an optimization constraint or have  the load factor
numerically analyzed in the simulations. The second subsection investigates  methods that do not involve the concept of the load factor but pertain to  our theme.

\subsection{Methods With Load Factor}

Game theory,  control methods,  mixed integer nonlinear programming, convex optimization, and machine learning methods have been the most promising
  techniques employed to  realize residential demand response involving the concept of the load factor.
When game theory is used to model demand response programs\cite{16Kamyab,15Forouzandehmehr,1607Safdarian,15ZAREEN,16LiMa,17Li,14Safdarian}, two games are typically involved: one for suppliers such as utilities and aggregators, and the other for customers. A demand response program is realized by attaining the Nash equilibrium of the games.
Game theory approaches often involve bidding, but a bidding scheme  for residential demand response can also be realized using other techniques, see, for example, \cite{14Adika}.
Game theory approaches can be further combined with blockchain technologies to improve network security\cite{18NOOR}.

 Control strategies  have been applied for demand response services as well\cite{18Diekerhof}. In this case, state-space representations are mostly employed to model system dynamics. For instance, Luo \emph{et al.}\cite{19Luo} proposed a multistage home energy management system consisting of forecasting,
day-ahead scheduling, and actual operation.
At the day-ahead scheduling stage, a coordinated home energy resource scheduling model constrained by the peak-to-average ratio
was constructed to minimize a one-day
home operation cost. At the actual operation stage, a model
predictive control based operational strategy was proposed
to correct home energy resource operations with the update of real-time information.

Models for residential energy demand have drawn much attention\cite{16Pradhan,15FARZAN}.
When models are certain or can be reliably estimated,  mixed integer nonlinear programming that involves discrete decision variables can be applied\cite{15Althaher,17Solanki,16Yao,15Paterakis}. In this case, discrete or integer decision variables often represent shiftable loads that can be adjusted to change demand curves.
If associated problem formulations do not involve discrete or integer decision variables, then conventional linear programming is applicable to residential demand management\cite{17Hayes}.

Convex optimization methods have been applied to solve a subproblem for demand response programs\cite{16MA,18Bahrami_PW}.
In\cite{16MA}, for example, an optimization problem that addressed a trade-off between  payments and the discomfort of users
was formulated, yielding both integer and continuous variables.
The problem was further separated into two subproblems, and one of them was solved using convex optimization methods.

To relax assumptions on entities in the system or probably to address system uncertainties, reinforcement learning for demand response  has been extensively studied recently\cite{VAZQUEZCANTELI20191072}.
Through the learning process, an agent or agents can learn how to optimize users' consumption patterns.
Considering the peak-to-average ratio,
Bahrami\emph{ et al.}\cite{18Bahrami} proposed an actor-critic structure that reduced the expected
cost of users in the aggregate load. Dehghanpour\emph{ et al.}\cite{18Dehghanpour} considered air conditioning loads as agents and
optimized their consumption patterns through modifying
the temperature set-points of the devices. Both consumption costs and users' comfort preferences were addressed.

Given the aforementioned state-of-the-art methodologies, a few points should be noted.
While shifting loads through demand response programs proves promising, it is worth mentioning that load shifting
may have impact on distribution systems\cite{16McKenna,1611Paterakis}.
In addition to routine operations, demand response programs can be utilized for emergency operations\cite{16AGHAEI}.
For a comprehensive review, the reader can refer to\cite{16HAIDER} and\cite{16ESTHER} on residential demand  and to\cite{16Safdarian} on a case study illustrating
the benefits of  demand response in consideration of residential users and utilities' load factor.

\subsection{Methods Without Load Factor}

This subsection examines recent studies on demand response programs without explicitly involving the concept of the load factor;
an energy management system has been a dominant way of realizing a demand response program in response to utility
pricing signals in the literature. Here is a list of examples. Hansen \emph{et al.}\cite{18Hansen} investigated a home energy management system that automated the energy usage. Observable Markov decision process approaches were proposed to minimize the household electricity bill.
Shafie-Khah \emph{et al.}\cite{18Shafie-Khah} proposed a stochastic energy management system that considered
uncertainties of the distributed renewable resources and the EV availability for charging and discharging.
Adika and Wang\cite{14Adika} examined a day-ahead demand-side bidding approach that maximized residential users' benefits.
Rastegar \emph{et al.}\cite{18Rastegar} constructed a two-level framework for residential energy management.
Customers minimized their payment costs and sent out the desired power scheduling of appliances at the first level;
 a multiobjective optimization framework was employed to improve technical characteristics of the distribution system at the second level.

Some authors focuses on the reduction of peak loads in residential demand response.
Vivekananthan \emph{et al.}\cite{14Vivekananthan} investigated
a reward based demand response algorithm that could shave network peaks.
Zhou \emph{et al.}\cite{18Zhou} established a multiobjective model of time-of-use and stepwise
power tariff for residential users, yielding load shifting from peak to
off-peak periods.

Most recently, learning based approaches to residential demand response have emerged, see, for example, \cite{19Ghasemkhani,15Wen}  and \cite{Ruelens17}.
In addition to direct investigations on methodologies, tools have been developed to facilitate existing demand response processes.
For instance, Paterakis \emph{et al.}\cite{16Paterakis} employed artificial neural networks and
the wavelet transform to predict the response of residential loads to price signals.
Wang \emph{et al.}\cite{18wang} developed  a method for estimating the residential demand response baseline.

Finally, it is worth mentioning that residential user behaviors may be studied in an aggregate manner, thereby introducing the concept of
residential demand aggregation\cite{19Elghitani}. Two  emerging topics are demand response methods for residential heating, ventilation, and air conditioning\cite{19Ma,17Erdinc,17Mahdavi}
and demand response methods for residential plug-in EVs\cite{18Rassaei,18Pal,18Munshi}.

\section{System Models}\label{sec_model}
This section discusses mathematical models describing the power consumption of home appliances, energy costs of residential users, and load factor of the grid. The day-ahead pricing scheme is considered. To realize an automatic demand response program, we consider an Internet of Things based environment with home appliances able to transmit and receive signals\cite{7004894,6935003}. These home appliances can be controlled by a central controller using IEEE standards, such as IEEE~802.3, IEEE~802.11, or IEEE~1901. Fig.~\ref{fig:framework} presents the system diagram and  Table~\ref{notations} summarizes notations and acronyms used throughout this paper.

\begin{figure}
\includegraphics[width=\linewidth]{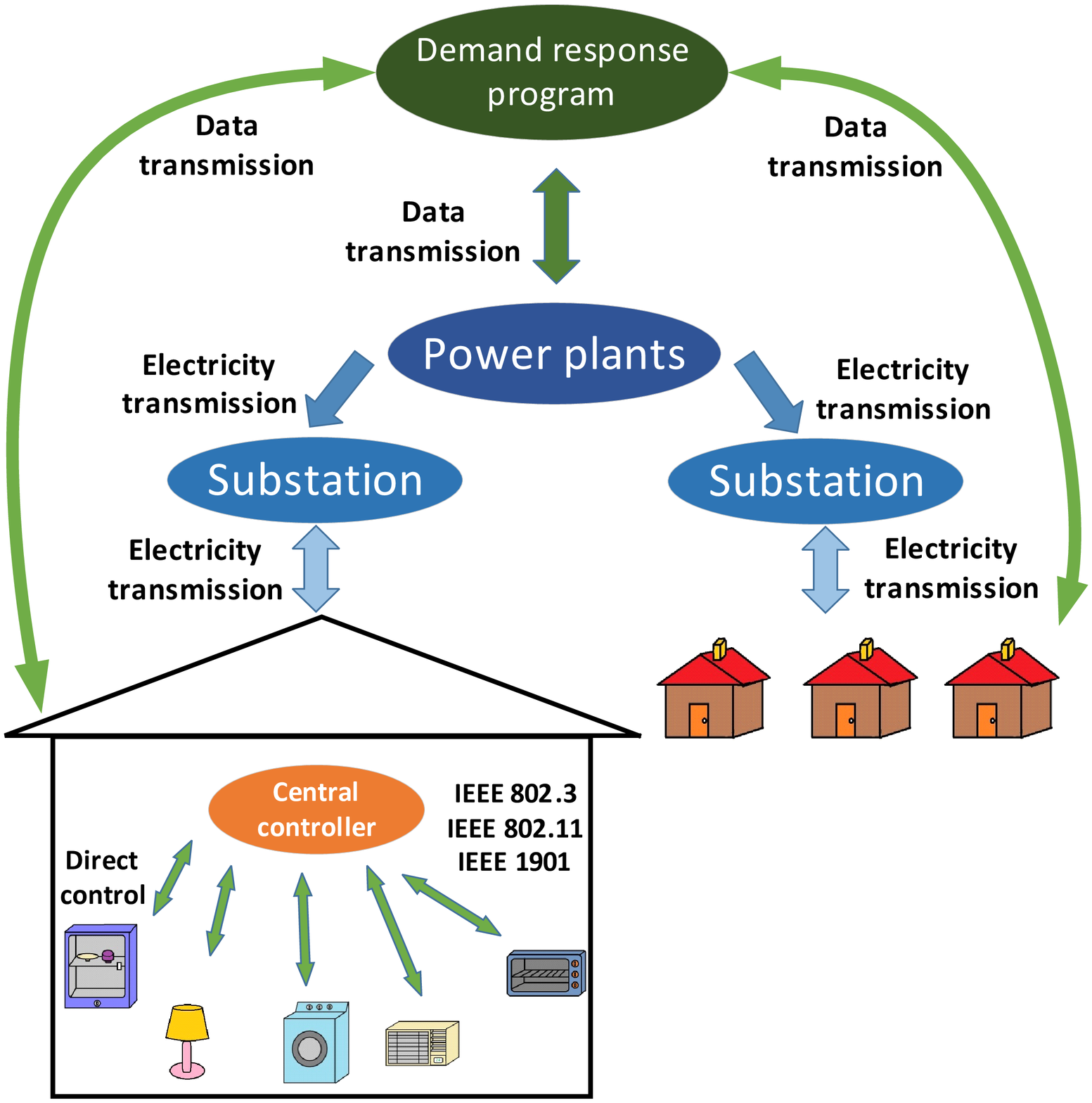}
\caption{System diagram of power suppliers and users in an Internet of Things based environment.}
\label {fig:framework}
\end{figure}

\subsection{Home Appliances}
Home appliances can be classified into three types~\cite{6461496}: unshiftable and inflexible appliances (the corresponding set is denoted by $\mathcal{A}$), shiftable and inflexible  appliances (the corresponding set is denoted by $\mathcal{A}_{S}$), and shiftable and flexible appliances (the corresponding set is denoted by $\mathcal{A}_{SF}$).
The loads induced by using those appliances can  be classified accordingly.
Let $h\in{\mathcal{H}}$ denote the time slot and $\Delta h$ denote the slot duration.

\begin{table}
\caption{Notations}
\centering
\label{notations}
\begin{tabular}{|P{1.3cm}|m{6.7cm}|}
\hline
$h$ & Time slot $h$ where $h\in \mathcal{H}$ \\ \hline
$H$ & Observation horizon (the size of $\mathcal{H}$) \\ \hline
$\mathcal{A}$& Set of unshiftable and inflexible appliances\\ \hline
$a$ & Unshiftable and inflexible appliance   \\ \hline
$p_{a}^h$ & Amount of power for appliance $a$ to operate in time slot $h$  \\ \hline
$\mathcal{A}_{S}$& Set of shiftable and inflexible appliances\\ \hline
$b$ & Shiftable and inflexible appliance    \\ \hline
$h_b$ & Working time slot of appliance $b$  \\ \hline
$s_b$ and $e_b$  & Start and end time slots of appliance $b$   \\  \hline
$w_b$ & Total working hours of appliance $b$  \\ \hline
$p_{b}^h(h_b)$ & Amount of power for appliance $b$ to operate in time slot $h$\\ \hline
$\mathcal{A}_{SF}$& Set of shiftable and flexible appliances\\ \hline
$c$ & Shiftable and flexible appliance  \\ \hline
 $s_c$ and $e_c$  & Start and end time slots of appliance  $c$ \\  \hline
$p_{c}^h$ & Amount of power for appliance $c$ to operate in time slot $h$\\ \hline
$p_c^{\max}$ & Maximum operating power of appliance $c$ \\ \hline
$p_c^{\min}$ & Minimum operating power of appliance $c$ \\ \hline
$p_{c,\text{agr}}^{\min}$ & Aggregate minimum power for using appliance $c$  \\ \hline
 $s_{\text{ev}}$ and $e_{\text{ev}}$  & Start and end time slots of EV battery charging\\  \hline
$p_{\text{ev}}^h$ & EV charging power in time slot $h$  \\ \hline
$p_{\text{ev}}^{\max}$ & Maximum charging rate of an EV \\ \hline
$B_{\text{ev}}^0$ & Initial capacity of the EV battery \\ \hline
$B_{\text{ev}}^{\min}$ &  Minimum capacity of the EV battery\\ \hline
$B_{\text{ev}}^{\max}$ &  Maximum capacity of the EV battery\\ \hline
$B^0$ & Initial capacity of the residential energy storage system \\ \hline
$B^h$ & Energy level of the residential energy storage system in time slot  $h$ \\ \hline
$B^{\min}$ &  Minimum capacity of the residential energy storage system\\ \hline
$B^{\max}$ &  Maximum capacity of the residential energy storage system\\ \hline
$u^h$  &  Charging and discharging control for the residential energy storage system in time slot  $h$\\ \hline
$r^h$  & Amount of power provided by renewable energy sources in time slot  $h$\\ \hline
\end{tabular}
\end{table}
\subsubsection{Unshiftable and Inflexible Appliances}
Some home appliances are used during specific time periods. For example, lights must be turned on in the evening, and it may not be possible to shift the time of use or adjust their switching time. Cooking appliances, such as electric pots, roasters, and microwave ovens, are also used in specific time slots. Residential users may use entertainment electronics such as televisions and computers only after work or school. These appliances are thus regarded as unshiftable and inflexible. For $a\in{\mathcal{A}}$, we denote $p_{a}^h$ as the amount of power for appliance $a$ to operate in time slot $h$.

\subsubsection{Shiftable and Inflexible Appliances}

Shiftable and inflexible  appliances are used at potentially any time, but their power consumption when performing a specified job is fixed (and thus inflexible). Vacuum cleaners and washing machines belong to this type. The time at which an automatic vacuum machine or a washing machine is operated may not be crucial (assuming that they do not make too much noise; otherwise, they should not be operated at night). A user could prescribe a possible time window, and then the demand response program could decide when the device operates according to these instructions. The time window must be sufficient for the appliance to complete the job.

Let $w_b$ denote the total working hours required by appliance~$b\in{\mathcal{A}_{S}}$.
Suppose that a residential user  sets
the acceptable start and end time slots of appliance $b$'s operation as $s_b$ and $e_b$, respectively ($s_b\leq e_b$).
Let ${C_{w_b}^{s_b \to e_b}}$ denote the set of all $w_b$-combinations of working time slots within the time window $[s_b,e_b]$.
A demand response program thus selects certain working time slots represented by $h_b$ from ${C_{w_b}^{s_b \to e_b}}$  for appliance $b$, i.e.,
\begin{equation}\label{eq_As}
h_b\in{C_{w_b}^{s_b \to e_b}} \qquad \forall{b\in{\mathcal{A}_{S}}}.
\end{equation}
 For instance, if $s_b=15$, $e_b=17$ , and $w_b$ = 2 (i.e., appliance $b$ needs 2 hours to finish its work), then
\begin{equation*}
C_{w_b}^{s_b \to e_b}=\left\{\left\{15, 16 \right\},\left\{15, 17 \right\},\left\{16, 17 \right\} \right\}
\end{equation*}
and $h_b$ could be equal to $\left\{15, 16 \right\}$, $\left\{15, 17 \right\}$ or $\left\{16, 17 \right\}$.
If $h_b=\left\{15, 16 \right\}$, then appliance $b$ operates from 14:00 to 16:00.
The associated power for appliance $b$ to operate in time slot $h$ is denoted
\begin{equation*}
 p_{b}^h(h_b)
\end{equation*}
 where $p_{b}^h(h_b)=0$ if $h_b \cap \left\{h \right\}=  \emptyset$.

\subsubsection{Shiftable and Flexible Appliances}
The use of shiftable and flexible appliances can be adjusted in two dimensions: when the appliances are used and how much power they should consume.
Heaters, air conditioners, and  EV charging stations can yield shiftable and flexible loads.
For example, residential users can lower an air conditioner's temperature setting (flexibility) and delay its use (shiftability) as they wish~\cite{6693775}.

Let $p_{c}^{h}$ denote the  amount of power for appliance $c$ to operate in time slot $h$, where $c \in \mathcal{A}_{SF}$, and
$s_c$ and $e_c$ denote the start and end time slots of appliance $c$, respectively.
The following constraints are imposed on $p_{c}^{h}$~\cite{5540263}:
\begin{equation}\label{eq_cst_c}
 p_c^{\min} \leq p_{c}^{h} \leq p_c^{\max}   \mbox{ and }   \sum_{h= s_c }^{e_c}   p_{c}^h \geq  p_{c,\text{agr}}^{\min}
\end{equation}
where  $p_c^{\min}$ and $p_c^{\max}$ represent  the minimum and maximum operating power, respectively, and
 $p_{c,\text{agr}}^{\min}$ represents the aggregate minimum power for using appliance $c$, which pertains to user comfort.
To ensure that feasible $p_{c}^{h}, h=s_{c},...,e_{c},$ exist, we assume
\begin{equation}\label{eq_ass_pc}
(e_c- s_c+1) p_c^{\max}  >  p_{c,\text{agr}}^{\min}.
\end{equation}

Although charging an EV can be considered as a shiftable and flexible load, we excluded EVs from $\mathcal{A}_{SF}$  because the charing power of an EV depends on the remaining capacity of the vehicle's battery, which imposes additional constraints~\cite{6674101}.
Let $p_{\text{ev}}^{h}$ denote the EV charging power  in time slot $h$, and $s_{\text{ev}}$ and $e_{\text{ev}}$ be the start and end time slots of the battery charging, respectively.
The following constraints on the charging rate and EV battery capacity  must be satisfied:
\begin{equation}\label{eq_cst_ev}
  0\leq p_{\text{ev}}^{h} \leq p_{\text{ev}}^{\max} \mbox{ and } B_{\text{ev}}^{\min} \leq  B_{\text{ev}}^0    + \sum_{h=s_{\text{ev}}}^{e_{\text{ev}}}  p_{\text{ev}}^{h} \Delta h  \leq   B_{\text{ev}}^{\max}
\end{equation}
where $p_{\text{ev}}^{\max}$ represents the maximum charging rate, $B_{\text{ev}}^{\min}$ is the minimum capacity of the EV battery, $B_{\text{ev}}^0$ is the initial capacity, and $B_{\text{ev}}^{\max}$ is the maximum capacity. To ensure that feasible $p_{\text{ev}}^{h}, h=s_{\text{ev}},...,e_{\text{ev}},$ exist, we assume
\begin{equation}\label{eq_ass_pev}
    B_{\text{ev}}^0    +  (e_{\text{ev}}- s_{\text{ev}}+1)  p_{\text{ev}}^{\max} \Delta h >  B_{\text{ev}}^{\min}.
\end{equation}

\subsection{Energy Cost and Load Factor}

A residential home can be integrated with renewable energy sources (RESs)
and equipped with a storage system for energy management. Let $B^h$ be the energy level of the storage system ($B^0$ represents the initial energy level), $r^h$ be the expected charging power from RESs, and $u^h$ be the control law that dictates the amount of power being charged to or discharged from the energy storage system.
The storage dynamics can be described as\cite{7927719}
 \begin{equation}\label{eq_storage}
   B^h=    B^{h-1} +  ( r^h-u^h) \Delta h.
 \end{equation}
 The constraints on the energy storage system are
  \begin{equation}\label{eq_cst_storage}
  0 \leq  B^h  \leq B^{\max}  \qquad  \forall h \in \mathcal{H}
 \end{equation}
 where $B^{\max}$ is the maximum storage capacity.

The total energy extracted from the grid in time slot $h$, denoted by $E_{\text{total}}^h$, can be expressed as
\begin{equation}
\label{alg:total}
\begin{aligned}
{E_{\text{total}}^h}= \max \{   \sum_{a\in{\mathcal{A}}}p_{a}^h+\sum_{b\in{\mathcal{A}_{S}}}p_{b}^h(h_b)\\
+\sum_{c\in{\mathcal{A}_{SF}}}p_{c}^h+p_{\text{ev}}^h   -u^h , 0\}  \Delta h.
\end{aligned}
\end{equation}
If $u^h$ is greater than the total power demand, then $E_{\text{total}}^h=0$ and the excess energy is discarded.
This situation can happen when $r^h$ is too large to be stored in the energy storage system and consumed by residential appliances.
If $u^h<0$, then $-u^h$ is the amount of power delivered to the energy storage system from the power grid.
The total energy cost can be obtained by multiplying the total energy consumption by the electricity price $\lambda^h$:
\begin{equation}
\sum_{h \in \mathcal{H} }E_{\text{total}}^h \lambda^h.
\label{alg:cost}
\end{equation}

Unlike customers, utilities are principally concerned with the load factor.
The load factor is critical because generation cost and grid quality have a direct connection with the load factor. A higher load factor  implies a more stable power grid and a lower cost of power generation\cite{14Gonen}. The load factor can be defined as the ratio of the average demand to the maximum  demand \cite{4652590,4914742}.
The following performance index can be used in a demand response program offered by utilities to improve their load factor:
\begin{equation}
 \frac{
 \underset{h\in \mathcal{H}}\sum E_{\text{total}}^h  /H  }{ \underset{h\in \mathcal{H}}{\max} \;E_{\text{total}}^h }
\label{alg:loadfactor}
\end{equation}
where $H$ is the size of $\mathcal{H}$ and represents the observation horizon.

\section{Pareto Optimal Demand Response Program}

This section proposes the PODR program that can be offered by utilities to residential users.
An MOP pertaining to power scheduling is formulated. The objectives of the optimization problem are to minimize the energy cost of a residential user in~(\ref{alg:cost}) and maximize the load factor in~(\ref{alg:loadfactor}).
To construct a stochastic search scheme for exploration and exploitation of the decision space,  we analyze the associated decision variables and propose a few methods of feasible value generations and mutation and crossover operations that preserve feasibility.
A multiobjective evolutionary algorithm for constructing the PODR program  is developed accordingly.
Finally, the algorithm complexity is discussed.

The MOP is formulated as
\begin{equation}\label{eq_MOP}
\begin{split}
  \underset{\bm{x}}{\min}  &\; f_{\text{cost}}(\bm{x})=\sum_{h \in \mathcal{H} }E_{\text{total}}^h  \lambda^h\\
    \underset{\bm{x}}{\max} & \; f_{\text{LF}}(\bm{x})= \frac{\underset{h\in \mathcal{H}}\sum E_{\text{total}}^h/H }{\underset{h\in \mathcal{H} }{\max}\;E_{\text{total}}^h}\\
 \text{subject to }   & (\ref{eq_cst_c}), (\ref{eq_cst_ev}), \text{ and } (\ref{eq_cst_storage})\qquad  \forall c \in \mathcal{A}_{SF}
\end{split}
\end{equation}
where $\bm{x}$ denotes the decision variable vector that contains decision variables $h_b$, $p_{c}^h$, $p_{\text{ev}}^h$, and $u^h$ for all $b,c,$ and $h$.
The objective functions are conflicting, so the global optimal solution that optimizes both objective functions simultaneously does not exist. To solve the MOP in~(\ref{eq_MOP}), Pareto optimality is adopted\cite{7050260}.
 A feasible point $\bm{x}'$, i.e., an $\bm{x}'$ that satisfies the constraints, dominates another feasible point $\bm{x}''$ if the conditions $f_{\text{cost}}(\bm{x}')\leq f_{\text{cost}}(\bm{x}'')$ and $f_{\text{LF}}(\bm{x}') \geq f_{\text{LF}}(\bm{x}'')$ hold true with at least one strict inequality.
A solution is nondominated (also called Pareto optimal or Pareto efficient) if improving one objective value must yield a degradation in the other objective value\cite{J13}.
A set of Pareto optimal solutions or nondominated solutions is desired.  A nondominated solution should be selected from the set on the basis of its associated performance represented by the Pareto front.

In the following discussions, we adopt a few terminologies used in genetic algorithms\cite{17Chong}.
Genetic algorithms are stochastic search techniques that have roots in genetics and employ a population based method to find solutions to optimization problems conventionally with a single objective.
These algorithms begin with
 a set of points randomly initialized. The set is termed the initial population. Points in the population are evaluated on the basis of the objective function, yielding function values called the fitness.
With the help of the fitness, a set of new points are generated using mutation and crossover operations.
These operations together with a selection operation based on the fitness are applied to improve an average fitness value from population to population.
The aforementioned procedure repeats iteratively to produce new populations until a stopping criterion is satisfied.

 In genetic algorithms, the mutation is performed on a candidate point and derives a new point termed a mutant. The probability of having a mutant is dictated by a mutation rate.
If the candidate point is represented by a binary string using an encoding scheme, then a typical mutation can be designed as stochastically complementing bits from 0 to 1 or vice versa.
While the mutation is applied to one candidate point, the crossover is performed on a pair of candidate points called the parents and produces a corresponding pair of points called the offspring.
The probability of performing the crossover operation is dictated by  a crossover rate.
In the case of using the binary representation, a typical crossover operation can be realized through an exchange of substrings of the parents.

To address constraints of optimization problems in genetic algorithms, penalty methods can be used. A penalty function is added to the objective function for point evaluation.
The fitness of a point becomes a sum of the objective function value and a penalty function value.
For an infeasible point, i.e., violating the constraints,
its penalty function value is nonzero  (negative for maximization problems and positive for minimization problems) and thus penalizes the objective value.
Because points with better fitness are prone to be kept in populations during the algorithm iterations, infeasible points can be gradually removed, leading to a feasible set.

Genetic algorithms provide a generic framework for solving optimization problems.
Suitable modifications can render them more efficient and powerful.
For example, we may use certain schemes dedicated to the problem of interest that randomly generate feasible points
 and provide mutation and crossover operations that preserve the feasibility of those points.
As such, algorithm efficiency can be improved because more computations are spent on improving feasible points instead of finding feasible ones.
Furthermore, by incorporating the concept of Pareto optimality into the fitness evaluation, we may design algorithms that can address multiple objectives, which are termed
multiobjective evolutionary algorithms.

Although a few multiobjective evolutionary algorithms are available for finding solutions of MOPs, most of them consider either pure continuous decision variables or pure discrete decision variables.
The situation in which
both continuous and discrete decision variables are involved, as in our scenario,
has not been well addressed\cite{MOEA_bk1}.
In addition, the constraints in~(\ref{eq_MOP}) can be difficult to address using conventional constraint handling techniques.
Methods of  feasible value generations and mutation and crossover operations that preserve feasibility are required.

To solve~(\ref{eq_MOP}), we first consider an encoding scheme for discrete variables and present the associated mutation and crossover operations.
Continuous variables are then addressed.
For discrete variable $h_b$, a feasible value can be readily generated according to~(\ref{eq_As}).
The following encoding scheme and associated mutation and crossover operations are adopted.
Let $\phi(h_b)$ be a binary representation of $h_b$ with length  $e_b-s_b+1$ and $[\phi(h_b)]_j$ denote the $j$th bit of $\phi(h_b)$.
The encoding scheme $\phi(\cdot)$ is designed as
\begin{equation}\label{eq_encoder}
  [\phi(h_b)]_j=
  \left\{
    \begin{array}{ll}
      1, & \hbox{ if } j+s_b-1 \in  h_b\\
      0, & \hbox{otherwise}
    \end{array}
  \right.
\end{equation}
for $j=1,2,...,e_b-s_b+1$. For instance, if
$s_b=15,e_b=17,w_b = 2$, and  $h_b=\left\{15, 16 \right\}$, then $\phi(h_b)=\; $110; if $h_b=\left\{15, 17 \right\}$, then $\phi(h_b)=\; $101.
For the mutation operation, we randomly find a pair   $([\phi(h_b)]_{j},[\phi(h_b)]_{j'})=(0,1)$ and switch their values.
For the crossover operation, we apply logic operation ``OR''  to two binary representations,
and then randomly choose $w_b$ bits with value 1 from the offspring and  convert the other bits with value 1 to value 0.
The condition in~(\ref{eq_As}) is thus satisfied  through the use of  the mutation and crossover operations.

For continuous decision variables $p_c^{h},p_{\text{ev}}^{h},$ and $u^h$, we note that the following mutation and crossover operations can produce feasible points in the decision space if  points involved are all feasible.

\begin{thm}\label{thm_gen_opr}
Let $x_{\text{new}}^{h}$ denote a mutant or offspring.
The mutation or crossover is performed according to
\begin{equation}\label{eq_conti_opr}
x_{\text{new}}^{h}= \delta x^{h}  +  (1-\delta) {x'}^{h}
\end{equation}
where $\delta \in [0,1]$ is a random number for all $h$.
For the mutation operation, $x^{h}$ is a  point in the population and ${x'}^{h}$ is a point that is randomly generated.
For the crossover operation, $x^{h}$ and ${x'}^{h}$ are distinct points in the population.
If $ x^{h}$ and ${x'}^{h}$ in~(\ref{eq_conti_opr}) are feasible, then  the mutant or offspring $x_{\text{new}}^{h}$ is feasible.
\end{thm}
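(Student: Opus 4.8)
The plan is to show that the feasible set of the continuous decision variables is convex and that~(\ref{eq_conti_opr}) is, for fixed $\delta$, a convex combination of two points of that set, so feasibility is inherited automatically. First I would observe that the three constraint families~(\ref{eq_cst_c}), (\ref{eq_cst_ev}), and~(\ref{eq_cst_storage}) decouple across the variable types: (\ref{eq_cst_c}) constrains only the $p_c^h$ (for each $c\in\mathcal{A}_{SF}$), (\ref{eq_cst_ev}) only the $p_{\text{ev}}^h$, and~(\ref{eq_cst_storage}) only the $u^h$, through $B^h$. Hence it suffices to treat each family separately, and for each one it is enough to verify that the region it describes is convex. No appeal to the objectives is needed, since~(\ref{eq_MOP}) constrains only these variables and the nonsmoothness of $E_{\text{total}}^h$ enters solely through $f_{\text{cost}}$ and $f_{\text{LF}}$.

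For $p_c^h$, constraint~(\ref{eq_cst_c}) is a box constraint $p_c^{\min}\le p_c^h\le p_c^{\max}$ together with one linear inequality $\sum_{h=s_c}^{e_c}p_c^h\ge p_{c,\text{agr}}^{\min}$, i.e. an intersection of half-spaces; the same holds verbatim for $p_{\text{ev}}^h$ with~(\ref{eq_cst_ev}). For $u^h$ the only extra work is to make the dependence explicit: unrolling the storage recursion~(\ref{eq_storage}) expresses $B^h$ as $B^0$ plus a partial sum of the terms $(r^k-u^k)\Delta h$, which is an affine function of the $u^k$ because $B^0$ and the $r^k$ are fixed data; substituting this into~(\ref{eq_cst_storage}) turns it into a finite list of affine inequalities in the $u^h$. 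In all three cases the feasible region is a polyhedron, hence convex.

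It then remains to recognize~(\ref{eq_conti_opr}) as a convex combination. Writing $x_{\text{new}}=(x_{\text{new}}^h)_h$, $x=(x^h)_h$, $x'=({x'}^h)_h$, equation~(\ref{eq_conti_opr}) reads $x_{\text{new}}=\delta x+(1-\delta)x'$ with the \emph{same} scalar $\delta\in[0,1]$ in every coordinate $h$. Box constraints are preserved coordinatewise, since $\delta x^h+(1-\delta){x'}^h$ lies between $x^h$ and ${x'}^h$; and for each linear inequality $\bm{a}^{\!\top}x\le\beta$ (or $\ge$) one has $\bm{a}^{\!\top}x_{\text{new}}=\delta(\bm{a}^{\!\top}x)+(1-\delta)(\bm{a}^{\!\top}x')$, a convex combination of two numbers each respecting the bound, hence it respects the bound too. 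Since $x$ and $x'$ are feasible by hypothesis, $x_{\text{new}}$ satisfies~(\ref{eq_cst_c}), (\ref{eq_cst_ev}), and~(\ref{eq_cst_storage}), which settles both the mutation and the crossover cases — the only difference between them being how ${x'}^h$ is produced.

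The step I would be most careful about is the coupling (aggregate) constraints: $\sum p_c^h\ge p_{c,\text{agr}}^{\min}$, the two-sided bound $B_{\text{ev}}^{\min}\le B_{\text{ev}}^0+\sum p_{\text{ev}}^h\Delta h\le B_{\text{ev}}^{\max}$, and the unrolled storage bounds on $B^h$. These are preserved precisely because~(\ref{eq_conti_opr}) uses one common $\delta$ for all $h$, so that summation commutes with the combination; an independent per-coordinate weight would break them. The box constraints, by contrast, would survive even a coordinatewise weight, so it is worth emphasizing the shared-$\delta$ point explicitly. Beyond this observation, the argument is a direct application of the convexity of polyhedra.
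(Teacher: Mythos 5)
Your proof is correct and follows essentially the same route as the paper's: both arguments reduce to the observation that all constraints on $p_c^h$, $p_{\text{ev}}^h$, and $u^h$ are affine (with $B^h$ affine in the controls via the storage recursion~(\ref{eq_storage})), so the convex combination~(\ref{eq_conti_opr}) with a single $\delta$ preserves them; your unrolling of the recursion is equivalent to the paper's step of showing that the combined trajectory $B_{\text{new}}^{h}=\delta B^{h}+(1-\delta){B'}^{h}$ satisfies the same recursion under $u_{\text{new}}^{h}$. Your explicit remark that the shared $\delta$ across all $h$ is what saves the aggregate constraints is a worthwhile clarification of a point the paper leaves implicit.
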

\begin{proof}
 Consider   $x^{h}=p_{c}^{h}$. If   $ p_c^{h}$ and ${p'}_c^{h}$ are feasible, then
\begin{equation*}
\begin{split}
   & p_{c}^{h}\in [ p_c^{\min} ,   p_c^{\max}] , {p'}_{c}^{h}\in [ p_c^{\min} ,  p_c^{\max}] ,\sum_{h= s_c }^{e_c}   p_{c}^h \geq  p_{c,\text{agr}}^{\min}  ,  \mbox{ and } \\
    & \sum_{h= s_c }^{e_c}  {p'}_{c}^h \geq  p_{c,\text{agr}}^{\min}.
\end{split}
\end{equation*}
We have
\begin{equation*}
\begin{split}
   & \delta p_c^{h}  +  (1-\delta) {p'}_c^{h} \in [ p_c^{\min} ,   p_c^{\max}]  \mbox{ and }     \\
    &  \sum_{h= s_c }^{e_c} \delta p_c^{h}  +  (1-\delta) {p'}_c^{h} \geq  p_{c,\text{agr}}^{\min}
\end{split}
\end{equation*}
for $\delta\in [0,1]$. Therefore, $p_{\text{new}}^{h}= \delta p_c^{h}  +  (1-\delta) {p'}_c^{h}$ is feasible.
A similar argument can be performed to show the feasibility property when $x^{h}=p_{\text{ev}}^{h}$.

Consider $x^{h}=u^{h}$.
 If   $ u^{h}$ and ${u'}^{h}$ are feasible, then we have
\begin{equation}\label{eq_feasible_u}
\begin{split}
   &  B^{h-1} +  ( r^h-u^h) \Delta h  \in  [0,B^{\max}]  \mbox{ and } \\
    & {B'}^{h-1} +  ( r^h-{u'}^h) \Delta h  \in  [0,B^{\max}] .
\end{split}
\end{equation}
Define
\begin{equation*}
  B_{\text{new}}^{h}= \delta B^{h} +  (1-\delta) {B'}^{h}     \mbox{ and }  u_{\text{new}}^{h}= \delta u^{h}  +  (1-\delta) {u'}^{h}.
\end{equation*}
By~(\ref{eq_feasible_u}), we have
\begin{equation*}
  B_{\text{new}}^{h-1} +  ( r^h-u_{\text{new}}^{h}) \Delta h \in  [0,B^{\max}].
\end{equation*}
Note that
\begin{equation*}
{\small
  \begin{split}
         &    B_{\text{new}}^{h}\\
{ }={ }  & \delta (B^{h-1}+( r^h-u^h) \Delta h) +  (1-\delta)( {B'}^{h-1} +( r^h-{u'}^h) \Delta h)         \\
   { }={ }  & B_{\text{new}}^{h-1} +  ( r^h-u_{\text{new}}^{h}) \Delta h .
\end{split}
}
 \end{equation*}
Therefore,  $B_{\text{new}}^{h} \in  [0,B^{\max}]$, which implies that $ u_{\text{new}}^{h}$ is feasible. \hfill $\Box$
\end{proof}

If there are methods of randomly generating feasible points, then the feasibility can be preserved during
the solution process using the mutation and crossover operations in Theorem~\ref{thm_gen_opr}.
Algorithms~\ref{alg_pch},~\ref{alg_EV}, and~\ref{alg_uh} present those methods, and their effectiveness are further confirmed by the following theorem.

\begin{algorithm}
\caption{Generation of Feasible $p_c^{h}$}
\label{alg_pch}
\begin{algorithmic}
\Require $s_c,e_c, p_c^{\min} ,  p_c^{\max} $, and $ p_{c,\text{agr}}^{\min}$.
\Ensure Feasible $p_c^{h}, h=s_c,...,e_c,$  that satisfy~(\ref{eq_cst_c}).
\State
Generate $p_c^{h}$ randomly from $[p_c^{\min} ,p_c^{\max} ]$.\\
\textbf{while} $ \sum_{h= s_c }^{e_c}   p_{c}^h <  p_{c,\text{agr}}^{\min} $ \textbf{do}
    \begin{equation}\label{eq_pc_increase}
  p_c^{h}:= p_c^{h}+ (p_c^{\max}-p_c^{h}) \delta^h
    \end{equation}
where  $\delta^h \in [0,1]$ is a random number.\\
\textbf{end while}
\end{algorithmic}
\end{algorithm}

\begin{algorithm}
\caption{Generation of Feasible $p_{\text{ev}}^{h}$}
\label{alg_EV}
\begin{algorithmic}
\Require $s_{\text{ev}},e_{\text{ev}}, p_{\text{ev}}^{\max} ,  B_{\text{ev}}^0,B_{\text{ev}}^{\min}$, and $B_{\text{ev}}^{\max}$.
\Ensure Feasible $p_{\text{ev}}^{h}, h=s_{\text{ev}},...,e_{\text{ev}},$  that satisfy~(\ref{eq_cst_ev}).
\State
Generate $p_{\text{ev}}^{h}$ randomly from $[0,p_{\text{ev}}^{\max} ]$.\\
\textbf{while} $\sum_{h=s_{\text{ev}}}^{e_{\text{ev}}}  p_{\text{ev}}^{h} \Delta h < B_{\text{ev}}^{\min} -  B_{\text{ev}}^0  $ \textbf{do}
    \begin{equation}\label{eq_pev_incease}
  p_{\text{ev}}^{h}:= p_{\text{ev}}^{h} + ( p_{\text{ev}}^{\max} -p_{\text{ev}}^{h})\delta^h
    \end{equation}
where  $\delta^h \in [0,1]$ is a random number.\\
\textbf{end while}
\State \textbf{if}  $\sum_{h=s_{\text{ev}}}^{e_{\text{ev}}}  p_{\text{ev}}^{h} \Delta h >  B_{\text{ev}}^{\max} -  B_{\text{ev}}^0 $  \textbf{then} \\
    \begin{equation}\label{eq_pev_trim}
  p_{\text{ev}}^{h}:=
        \frac{  B_{\text{ev}}^{\max} -  B_{\text{ev}}^0 }{\sum_{h=s_{\text{ev}}}^{e_{\text{ev}}}  p_{\text{ev}}^{h}\Delta h}    p_{\text{ev}}^{h}
    \end{equation}
\textbf{end if}
\end{algorithmic}
\end{algorithm}

\begin{algorithm}
\caption{Generation of Feasible $u^{h}$}
\label{alg_uh}
\begin{algorithmic}
\Require $B^0$ and $ r^h$.
\Ensure Feasible $u^{h}, h\in \mathcal{H}$  (i.e., the constraints in~(\ref{eq_cst_storage}) are satisfied).
\State
\textbf{for} $h=1:H $ \textbf{do}\\
Randomly generate $u^h$ such that
    \begin{equation}\label{eq_uh_recurence}
 u^h \in [  \frac{B^{h-1}+r^h\Delta h -B^{\max}}{\Delta h}  ,   \frac{B^{h-1}+r^h\Delta h }{\Delta h}].
    \end{equation}
\State
Evaluate
\begin{equation*}
B^{h}=B^{h-1}+  (r^h-u^h)\Delta h.
\end{equation*}
\textbf{end for}
\end{algorithmic}
\end{algorithm}

\begin{thm}\label{thm_feasibility}
  Under the assumptions described in~(\ref{eq_ass_pc}) and~(\ref{eq_ass_pev}), Algorithms~\ref{alg_pch},~\ref{alg_EV}, and~\ref{alg_uh} produce feasible values for $p_{c}^{h}, p_{\text{ev}}^{h},$ and $u^h$.
\end{thm}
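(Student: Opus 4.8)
The plan is to handle the three algorithms one at a time, since each generates a separate block of the decision vector and the relevant constraints decouple. For Algorithm~\ref{alg_pch}, I would first show that membership $p_c^{h}\in[p_c^{\min},p_c^{\max}]$ is an invariant: the initial draw satisfies it, and the update~(\ref{eq_pc_increase}) rewrites the new value as the convex combination $(1-\delta^{h})\,p_c^{h}+\delta^{h}\,p_c^{\max}$ of two points of $[p_c^{\min},p_c^{\max}]$, so it stays there; moreover the update is nondecreasing because $p_c^{\max}-p_c^{h}\ge 0$, hence $\sum_{h=s_c}^{e_c}p_c^{h}$ never decreases along the loop. The real work is termination: each application of~(\ref{eq_pc_increase}) multiplies the gap $p_c^{\max}-p_c^{h}$ by $1-\delta^{h}$, so repeated updates contract this gap to $0$ (almost surely, for any nondegenerate law of the random $\delta^{h}$), and therefore $\sum_{h=s_c}^{e_c}p_c^{h}\to (e_c-s_c+1)\,p_c^{\max}$, which by assumption~(\ref{eq_ass_pc}) strictly exceeds $p_{c,\text{agr}}^{\min}$. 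Hence the loop guard fails after finitely many iterations, and on exit both inequalities in~(\ref{eq_cst_c}) hold.

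For Algorithm~\ref{alg_EV} I would reuse the same reasoning for the while loop: update~(\ref{eq_pev_incease}) keeps each $p_{\text{ev}}^{h}$ in $[0,p_{\text{ev}}^{\max}]$, makes $\sum_{h}p_{\text{ev}}^{h}\Delta h$ nondecreasing, and drives it toward $(e_{\text{ev}}-s_{\text{ev}}+1)\,p_{\text{ev}}^{\max}\Delta h$, which by~(\ref{eq_ass_pev}) exceeds $B_{\text{ev}}^{\min}-B_{\text{ev}}^{0}$, so the loop terminates with $B_{\text{ev}}^{0}+\sum_{h}p_{\text{ev}}^{h}\Delta h\ge B_{\text{ev}}^{\min}$. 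It then remains to analyze the conditional rescaling~(\ref{eq_pev_trim}): its multiplier $(B_{\text{ev}}^{\max}-B_{\text{ev}}^{0})/\sum_{h}p_{\text{ev}}^{h}\Delta h$ lies in $(0,1)$ precisely when the \textbf{if}-guard holds, so rescaling keeps every $p_{\text{ev}}^{h}$ in $[0,p_{\text{ev}}^{\max}]$ and makes $B_{\text{ev}}^{0}+\sum_{h}p_{\text{ev}}^{h}\Delta h=B_{\text{ev}}^{\max}$; since $B_{\text{ev}}^{\max}\ge B_{\text{ev}}^{\min}$ the lower capacity bound survives the rescaling, so all constraints in~(\ref{eq_cst_ev}) hold on exit.

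For Algorithm~\ref{alg_uh} a short induction on $h$ suffices. Assuming $B^{h-1}\in[0,B^{\max}]$ (the base case being the given initial level $B^{0}$), the interval for $u^{h}$ in~(\ref{eq_uh_recurence}) is nonempty because $B^{\max}\ge 0$, and $B^{h}=B^{h-1}+r^{h}\Delta h-u^{h}\Delta h$ is an affine, strictly decreasing function of $u^{h}$ that equals $B^{\max}$ at the left endpoint and $0$ at the right endpoint of that interval; hence $B^{h}\in[0,B^{\max}]$ for every admissible $u^{h}$, closing the induction and establishing~(\ref{eq_cst_storage}) for all $h\in\mathcal{H}$.

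The step I expect to be the main obstacle is the termination of the two while loops: one must show not merely that the running sums are bounded above, but that they actually surpass the thresholds $p_{c,\text{agr}}^{\min}$ and $B_{\text{ev}}^{\min}-B_{\text{ev}}^{0}$, which is exactly what the slack assumptions~(\ref{eq_ass_pc}) and~(\ref{eq_ass_pev}) deliver once one knows the sums converge to $(e_c-s_c+1)\,p_c^{\max}$ and $(e_{\text{ev}}-s_{\text{ev}}+1)\,p_{\text{ev}}^{\max}\Delta h$; making that convergence rigorous needs the random multipliers $\delta^{h}$ to be nondegenerate so the gap contraction is genuinely effective. The invariance bookkeeping for the box constraints and the storage induction are otherwise routine.
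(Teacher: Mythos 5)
Your proof is correct and follows essentially the same route as the paper's: monotone increase of the box-constrained variables toward their upper bounds combined with the slack assumptions~(\ref{eq_ass_pc}) and~(\ref{eq_ass_pev}) to exit the two while loops, a sub-unit rescaling factor for the trimming step~(\ref{eq_pev_trim}), and the direct observation that the sampling interval~(\ref{eq_uh_recurence}) pins $B^h$ inside $[0,B^{\max}]$. You are in fact somewhat more careful than the paper on loop termination (the explicit gap-contraction and almost-sure convergence argument) and on verifying that the rescaling preserves the lower capacity bound, both of which the paper treats informally.
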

\begin{proof}
According to~(\ref{eq_pc_increase}) and the fact that $p_c^h,h=s_c,...,e_c,$ are initially generated from $[p_c^{\min},p_c^{\max}]$,  variables $p_c^h,h=s_c,...,e_c$, with $p_c^h\geq p_c^{\min}$  approach $p_c^{\max}$ from the left as the number of iterations increases.
Owing to the assumption in~(\ref{eq_ass_pc}), the conditions in~(\ref{eq_cst_c}) hold true eventually when the values of $p_c^h,h=s_c,...,e_c,$ increase.
By a similar argument and according to~(\ref{eq_ass_pev}) and~(\ref{eq_pev_incease}), we have $ B_{\text{ev}}^{\min} \leq  B_{\text{ev}}^0    + \sum_{h=s_{\text{ev}}}^{e_{\text{ev}}}  p_{\text{ev}}^{h} \Delta h$ eventually
when the values of $ p_{\text{ev}}^{h},h=s_{\text{ev}},...,e_{\text{ev}},$ increase; however, it is possible that
an increment is too large to have $ B_{\text{ev}}^0    + \sum_{h=s_{\text{ev}}}^{e_{\text{ev}}}  p_{\text{ev}}^{h} \Delta h  \leq  B_{\text{ev}}^{\max}$.
To remedy this, we use~(\ref{eq_pev_trim}) to reduce  the values of $p_{\text{ev}}^{h},h=s_{\text{ev}},...,e_{\text{ev}}$. When~(\ref{eq_pev_trim}) is executed, we have the following two results:
$p_{\text{ev}}^{h}$ on the left-hand side of~(\ref{eq_pev_trim}) satisfies
$ p_{\text{ev}}^{h} < p_{\text{ev}}^{\max}$ because the term
$ ( B_{\text{ev}}^{\max} -  B_{\text{ev}}^0 )/ \sum_{h=s_{\text{ev}}}^{e_{\text{ev}}}  p_{\text{ev}}^{h}\Delta h$ on the right-hand side of~(\ref{eq_pev_trim}) is less than 1; and
$p_{\text{ev}}^{h}$ on the left-hand side of~(\ref{eq_pev_trim}) satisfies
  $  \sum_{h=s_{\text{ev}}}^{e_{\text{ev}}}  p_{\text{ev}}^{h}=( B_{\text{ev}}^{\max} -  B_{\text{ev}}^0 )/\Delta h.$
The conditions in~(\ref{eq_cst_ev}) are then satisfied.
Finally, we note that~(\ref{eq_uh_recurence}) implies~(\ref{eq_cst_storage}), illustrating the feasibility of  $u^h$.
\hfill $\Box$
\end{proof}

With the help of Algorithms~\ref{alg_pch}--\ref{alg_uh},
Algorithm~\ref{alg_PODR} presents the PODR program that can be offered by utilities to residential users.

\begin{algorithm}
\caption{Pareto Optimal Power Scheduling}
\label{alg_PODR}
\begin{algorithmic}
\Require Electricity price $\lambda^h$;  MOP in~(\ref{eq_MOP}) with
parameters $p_{a}^h$, $p_{b}^h$, $w_b$, $s_b$, $e_b$, $p_{c}^{\max}$, $p_c^{\min}$, $p_{c,\text{agr}}^{\min}$, $s_c$,
  $e_c$, $p_{\text{ev}}^{\max}$, $s_{\text{ev}}$, and  $e_{\text{ev}}$; mutation rate $\mu$;
nominal population size $N_{\text{nom}}$; maximum population size $N_{\max}$; and maximum iteration number  $t_{\max}$.
\Ensure PODR program.
\State  \textbf{Step 1)} Initialize $X(0)$: randomly generate working hours
$h_b$ for shiftable but inflexible appliances; apply Algorithms~\ref{alg_pch},~\ref{alg_EV}, and~\ref{alg_uh} to generate feasible values for $p_c^h$, $p_{\text{ev}}^h$, and $u^h$. Remove dominated points from $X(0)$.\\
\textbf{Step 2)} Let $t_{c}=0$.\\
\textbf{while }$t_c \leq t_{\max}$ \textbf{ do}\\
 \begin{description}
   \item[] \textbf{Step 2.1)} Clone points in $X(t_c)$ with clone rate $N_{\max}/N_{\text{nom}}$.
Apply mutation operation with  rate $\mu$ and crossover operation with rate $1-\mu$ defined in Theorem~\ref{thm_gen_opr} to cloned points.
Store the mutants and offspring in $X(t_c)$.
\item[] \textbf{Step 2.2)} Remove dominated points from $X(t_c)$. If $\mid X(t_c) \mid > N_{\text{nom}}$, then use an archive update method to reduce its size.
\item[]\textbf{Step 2.3)} Let $X(t_c+1)=X(t_c)$ and
 $t_c=t_c+1$.
\end{description}\\
\textbf{end while}\\
 \textbf{Step 3)} Select the knee of the approximate Pareto front associated with the approximate Pareto  set $X(t_{\max})$.
\end{algorithmic}
\end{algorithm}

In Algorithm~\ref{alg_PODR}, information about appliances and acceptable times of use is set first. Mutation and crossover operations are then performed over $X(t_c)$ in Step~2.1.  In Step~2.2, the archive is updated by removing some nondominated points if the population size $\mid X(t_c) \mid$ is too large. After a number of iterations, an approximate Pareto set and front are obtained in Step~3.
In practice, the knee of the approximate Pareto front is often preferred for several reasons: it can achieve excellent overall system performance if the front is bent;
it represents the solution closest to the ideal one that is not reachable; and it has rich geometrical and physical meanings\cite{15Zhang,09Rachmawati}.
 In Step~3, the knee solution is selected according to~\cite{7465803}:
\begin{equation}\label{alg:mmd}
\begin{split}
\bm{x}^*= & \arg \min_{ \bm{x}\in X(t_{\max}) } \; \frac{f_{\text{cost}}( \bm{x} )-\underset{ \bm{x}' \in X(t_{\max})}{\min} f_{\text{cost}}(\bm{x}')}{L_{\text{cost}}}\\
    & +\frac{\underset{ \bm{x}' \in X(t_{\max})}{\max} f_{\text{LF}}( \bm{x}') - f_{\text{LF}}(\bm{x})  }{L_{\text{LF}}}
\end{split}
\end{equation}
where
\begin{equation}
\begin{split}
L_{\text{cost}}{ }={ } & \underset{ \bm{x} \in X(t_{\max})}{\max} f_{\text{cost}}( \bm{x})-\underset{ \bm{x} \in X(t_{\max})}{\min} f_{\text{cost}}( \bm{x}) \mbox{ and }\\
L_{\text{LF}}{ }={ } &  \underset{ \bm{x}\in X(t_{\max})}{\max} f_{\text{LF}}( \bm{x})-\underset{ \bm{x} \in X(t_{\max})}{\min} f_{\text{LF}}( \bm{x})
\end{split}
\end{equation}
are the maximum spreads of the approximate Pareto front in the first and second dimensions, respectively, and
\begin{equation*}
\left[
  \begin{array}{cc}
     \frac{\underset{ \bm{x} \in X(t_{\max})}{\min} f_{\text{cost}}(\bm{x})}{L_{\text{cost}}} & \frac{\underset{ \bm{x} \in X(t_{\max})}{\max} f_{\text{LF}}( \bm{x})   }{L_{\text{LF}}} \\
  \end{array}
\right]^T
\end{equation*}
is the ideal vector.

As shown in~(\ref{alg:mmd}), the final Pareto optimal solution is selected using the system's knowledge of the maximum spreads, critical information in multicriteria decision making. Such information is not obtained using conventional single-objective optimization approaches.
The knee solution $\bm{x}^*$ corresponds to a Pareto optimal power consumption profile, yielding the PODR program.

The complexity of Algorithm~\ref{alg_PODR} mainly depends on the use of Pareto concepts that are the most computationally expensive.
The complexity can be roughly expressed as\cite{MOEA_bk1}
\begin{equation*}
O(2 t_{\max} N_{\text{nom}}(t_{\text{com}} +   N_{\text{nom}}-1    )   )
\end{equation*}
where $t_{\text{com}}$ represents the average computational time for objective evaluation.

\section{Simulation Results}
This section presents an examination of the power scheduling of 400 residential users.\footnote{In~\cite{4914742}, the total available capacity of a region was set at 2296 kW. In our scenario,
each residential user consumes 5.5 kW at most, which thus accounts for approximately 400 residential users in the region.}
Each residence was equipped with at most thirteen unshiftable and inflexible appliances, three shiftable and inflexible appliances, two shiftable and flexible appliances, one EV, and possibly
one energy storage system integrated with solar energy sources.
The exact number of appliances and associated types for a residential user were randomly chosen.
Solar energy data in\cite{EIA} were used.
Let ${\mathcal{H}={\left\{ 1,2,...,24 \right\}}}$ and
$\Delta h=1$.
 Table~\ref{parameter1} presents the associated settings, and the notation ``unif($h_1,h_2,t$)'' therein indicates that an appliance required $t$ hours of working time starting from a number randomly chosen from $\{h_1, h_1+1, ..., h_2 \}$. For example, appliance $a_4$ needed 5 hours to complete the task, and the start time slot could have been 16, 17, or 18.
Most parameter values were chosen according to\cite{Web1} and\cite{5991951}.

The day-ahead prices illustrated in Fig.~\ref{fig_price} were used~\cite{Web2}.
After analyzing total energy consumption of each month in 2017, we discovered that energy consumption peaked in July.
Four representative days in July were selected and listed in a decreasing order in terms of their price range as follows:
27 July 2017 (maximum price range),  6 July 2017 (25th percentile), 23 July 2017 (75th percentile), and 29 July 2017 (minimum price range).
Among them, 27 July 2017 also had the highest price.

 \begin{table}
\centering
\caption{Simulation Parameters}
\label{parameter1}
\begin{tabular}{|P{1.5cm}|P{2cm}|P{3cm}|}
\hline

                 Notations & Power  &Working Schedule\\ \hline
$p_{a_1}^h$                  & 0.02 kW    & $h \in \{17,18,...,24\}$               \\ \hline
$p_{a_2}^h$                  & 0.22 kW  & unif(18, 22, 3)                 \\ \hline
$p_{a_3}^h$                  & 0.2  kW  & unif(11, 13, 3)              \\ \hline
$p_{a_4}^h$                  & 0.2  kW  & unif(16, 18, 5)            \\ \hline
$p_{a_5}^h$                  & 0.7  kW  & unif(18, 22, 1)            \\ \hline
$p_{a_6}^h$                  & 1.3  kW  & unif(14, 16, 1)            \\ \hline
$p_{a_7}^h$                  & 0.2  kW  & unif(18, 22, 1)            \\ \hline
$p_{a_8}^h$                  & 0.08 kW  & unif(18, 20, 3)            \\ \hline
$p_{a_9}^h$                  & 0.05 kW  & $h \in \{1,2,...,24\}$             \\ \hline
$p_{a_{10}}^h$                  & 1.5 kW   & $h =8$             \\ \hline
$p_{a_{11}}^h$                 & 1.6 kW   & $h \in \{17,18\}$              \\ \hline
$p_{a_{12}}^h$                 & 0.2 kW   & $h \in \{1,2,...,24\}$            \\ \hline
$p_{a_{13}}^h$                 & 0.8 kW   & $h =17$            \\ \hline
\hline
Notations & \multicolumn{2}{c|}{Values}  \\ \hline
$s_{b_1},e_{b_1}$         & \multicolumn{2}{c|}{Random number from  $\{ 10, 11, 12, 13 \}$, $e_{b_1}=s_{b_1}+7$} \\ \hline
$p_{b_1}^h, w_{b_1} $       & \multicolumn{2}{c|}{1 kW if     $h_{b_1} \cap \left\{h \right\}\not=  \emptyset, w_{b_1}=1 $  }               \\ \hline
$s_{b_2},e_{b_2}$         & \multicolumn{2}{c|}{Random number from  $\{ 12, 13, 14, 15 \}$, $e_{b_2}=s_{b_2}+4$} \\ \hline
$p_{b_2}^h, w_{b_2} $       & \multicolumn{2}{c|}{1 kW if     $h_{b_2} \cap \left\{h \right\}\not=  \emptyset, w_{b_2}=2 $  }               \\ \hline
$s_{b_3},e_{b_3}$         & \multicolumn{2}{c|}{Random number from  $\{ 13, 14, 15, 16 \}$, $e_{b_3}=s_{b_3}+7$} \\ \hline
$p_{b_3}^h, w_{b_3} $       & \multicolumn{2}{c|}{2 kW if     $h_{b_3} \cap \left\{h \right\}\not=  \emptyset, w_{b_3}=2 $  }               \\ \hline
$s_{c_1}$         &\multicolumn{2}{c|}{12} \\ \hline
$e_{c_1}$         &\multicolumn{2}{c|}{24} \\ \hline
$s_{c_2}$         &\multicolumn{2}{c|}{Random number from  $\{ 20, 21, 22, 23  \}$} \\ \hline
$e_{c_2}$         &\multicolumn{2}{c|}{$s_{c_2}+9$}\\ \hline
$s_{\text{ev}}$         &  \multicolumn{2}{c|}{Random number from  $\{ 18, 19,..., 22  \}$} \\ \hline
$e_{\text{ev}}$         &\multicolumn{2}{c|}{$s_{\text{ev}}+11$}\\ \hline
$p_{c_1}^{\max},p_{c_2}^{\max}$                 & \multicolumn{2}{c|}{3 kW }            \\ \hline
$p_{c_1}^{\min},p_{c_2}^{\min}$                 & \multicolumn{2}{c|}{ 0.5 kW }            \\ \hline
$p_{c_1,\text{agr}}^{\min}$                 & \multicolumn{2}{c|}{ 29 kW }            \\ \hline
$p_{c_2,\text{agr}}^{\min}$                 & \multicolumn{2}{c|}{ 12 kW }            \\ \hline
$p_{\text{ev}}^{\max}$                 & \multicolumn{2}{c|}{3 kW }          \\ \hline
$B^{\max},B^0$                 & \multicolumn{2}{c|}{4 kWh, 1 kWh  }          \\ \hline
$B_{\text{ev}}^{\max}, B_{\text{ev}}^{\min}$    & \multicolumn{2}{c|}{24 kWh, $B_{\text{ev}}^{\min}=0.8 B_{\text{ev}}^{\max}$    }      \\ \hline
$B_{\text{ev}}^0$                 &  \multicolumn{2}{c|}{ Random number from  $[0.3B_{\text{ev}}^{\max},0.6B_{\text{ev}}^{\max}]$ }          \\ \hline
\end{tabular}
\end{table}

\begin{figure}
\includegraphics[width=\linewidth]{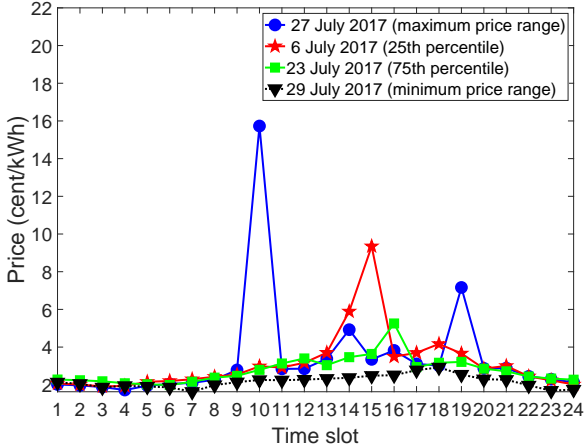}
\caption{Pricing schemes used in our simulations.}
\label{fig_price}
\end{figure}

\begin{figure}

\includegraphics[width=\linewidth]{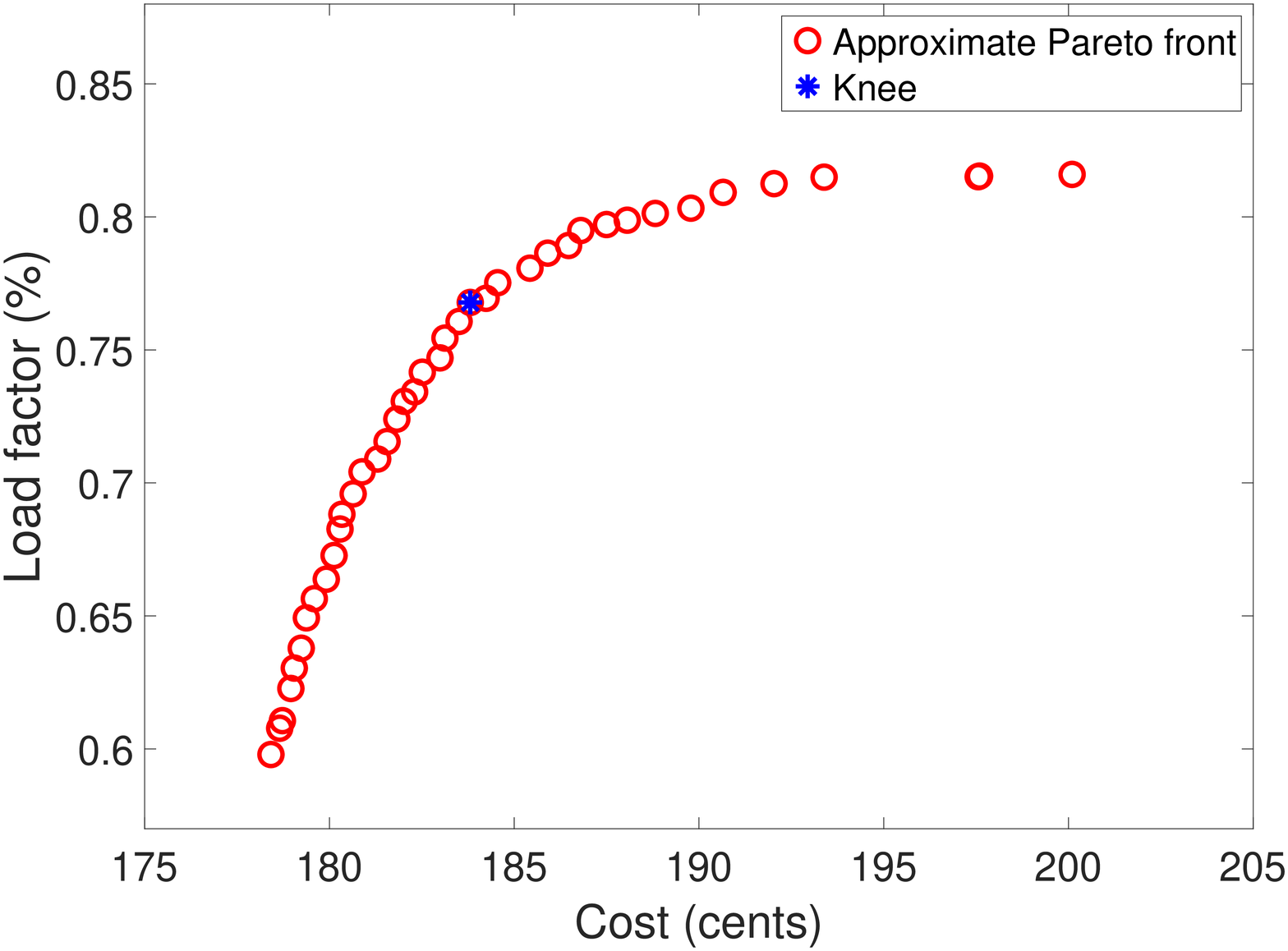}
\caption{Sampled approximate Pareto front obtained by solving~(\ref{eq_MOP}).}
\label{fig:pareto1}
\end{figure}

The PODR program for users was obtained using Algorithm~\ref{alg_PODR} with $\mu=0.8, N_{\text{nom}}=40,N_{\max}=400$ and $t_{\max}=400$.
Approximate Pareto fronts associated with residential users were produced.
Fig.~\ref{fig:pareto1} plots a sampled approximate Pareto front. Each point on the front corresponded to two values:  energy cost and load factor. The shape of the front confirmed that minimizing the cost and maximizing the load factor were  conflicting objectives.
The knee was selected according to~(\ref{alg:mmd}).

Our multiobjective approach was compared with an area-load method modified from\cite{13Kunwar}, a payment minimization method modified from\cite{4914742}, and load factor maximization and load variance minimization methods modified from\cite{11Sortomme} (see Appendix).  Owing to  constraint structures and different types of variables that were involved,
conventional stochastic search methods could hardly produce feasible points. Algorithms \ref{alg_pch}--\ref{alg_uh} were thus applied to produce a portion of initial points that were
used through the solution processes associated with the comparable methods.
Fig.~\ref{fig:allcompare} presents individual performance on the representative days.
Table~\ref{comparison_cost} summarizes the performance (expressed as a percentage) of each method.
 For the energy cost, the percentages were evaluated with respect to our method. A smaller percentage indicated a lower cost.
For the load factor, the percentages were evaluated with respect to the load variance minimization method. A larger percentage indicated a higher load factor.

Among those methods, the PODR program balanced the two conflicting objectives in an advantageous manner. This should be generally the case because our approach
 considered join optimization of the two objectives,  had robust constraint handling techniques presented in Algorithms~\ref{alg_pch}-\ref{alg_uh},
and employed dedicate mutation and crossover operations to preserve solution feasibility, as shown in Theorem~\ref{thm_feasibility}.
The modified area-load method also adopted a multiobjective approach; as compared to the PODR, it yielded a small improvement in the load factor by approximately 2.8\% (-7.8\%+10.6\%) but a large degradation in the energy costs by 14.5\%.
For single-objective optimization methods,
load variance minimization and load factor maximization yielded larger load factors than other methods because performance metrics related to the load factor were optimized; however, the resulting energy costs
for residential users, which were not considered during the solution process, were the worst among other methods. Finally, the modified payment minimization method yielded satisfactory costs for residential users but the worst load factor; the associated energy cost was higher than that of the PODR because that method  lacked suitable constraint handling techniques.

\begin{figure}
\begin{equation*}
  \begin{array}{c}
     \includegraphics[width=\linewidth]{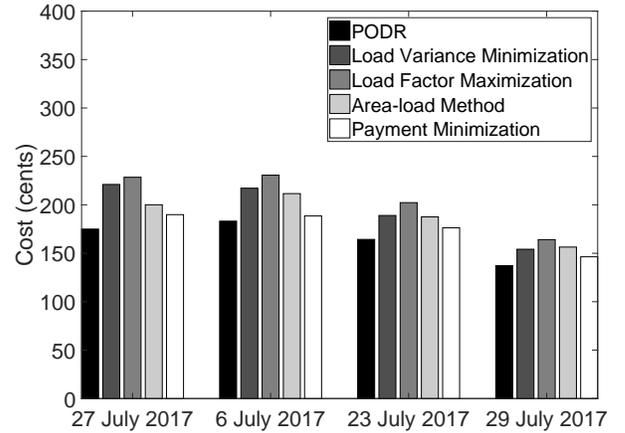}\\
    \mbox{(a)}\\
    \includegraphics[width=\linewidth]{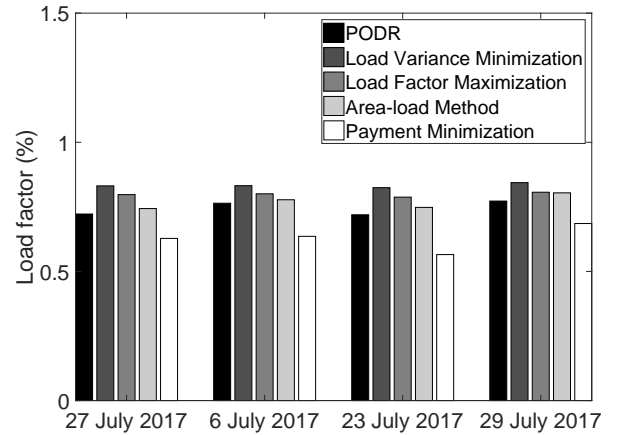} \\
     \mbox{(b)}
   \end{array}
\end{equation*}
\caption{(a) Energy costs and (b) load factors of 400 residential users.}
\label{fig:allcompare}
\end{figure}

\begin{table}
\caption{Comparison of Various Power Scheduling Methods}
\label{comparison_cost}
\centering
{\scriptsize
%\hspace{-0.5cm}
\begin{tabular}{|@{}c@{}|@{}c@{}|@{}c@{}|@{}c@{}|@{}c@{}|}
\hline
\multicolumn{5}{|c|}{ \footnotesize Cost Increase With Respect to }\\
\multicolumn{5}{|c|}{ \footnotesize  PODR Program}\\ \hline
\multirow{2}{*}{Date}&Load Variance&Load Factor&Area-load&Payment Minimization\\&Minimization&Maximization&Method& \\ \hline
July 27, 2017&26.4\%&30.6\%&14.4\%&8.5\% \\ \hline
July 6, 2017&18.6\%&26\%&15.5\%&2.9\% \\ \hline
July 23, 2017&15\%&23.1\%&14.1\%&7.3\%  \\ \hline
July 29, 2017&12.4\%&19.6\%&14\%&6.7\%  \\ \hline
average&18.1\%&24.8\%&14.5\%&6.4\%  \\ \hline \hline
\multicolumn{5}{|c|}{  \footnotesize  Load Factor Reduction With Respect to }\\
\multicolumn{5}{|c|}{  \footnotesize  Modified Load Variance Minimization Method}\\ \hline
\multirow{2}{*}{Date}&PODR&Load Factor&Area-load&Payment Minimization\\&Program&Maximization&Method& \\ \hline
July 27, 2017&-13.1\%&-4.1\%&-10.6\%&-24.5\% \\ \hline
July 6, 2017&-8.2\%&-3.8\%&-6.5\%&-23.6\% \\ \hline
July 23, 2017&-12.7\%&-4.4\%&-9.3\%&-31.4\%  \\ \hline
July 29, 2017&-8.4\%&-4.4\%&-4.7\%&-18.8\%  \\ \hline
average&-10.6\%&-4.2\%&-7.8\%&-24.6\%  \\ \hline
\end{tabular}
}
\end{table}

\section{Conclusion}
This paper investigated two critical aspects of power scheduling: residential users' energy costs and the power grid's load factor.
These two aspects should be considered simultaneously when utilities are to offer demand response programs to residential users.
In practice, minimizing the cost and maximizing the load factor are two conflicting objectives---one cannot be optimized without worsening the other. To address this challenge, we proposed a Pareto optimal demand response program that determined the Pareto optimal power scheduling. This program was constructed by solving a multiobjective optimization problem.
To explore and exploit the  decision space associated with the problem, a few algorithms were developed to generate feasible values for decision variables.
Relevant analysis was performed to show the effectiveness of the algorithms.
 Compared with existing methods for power scheduling, the Pareto optimal demand response program found a satisfying balance between the two objectives by sacrificing one objective slightly while substantially improving the other. This study was mainly focused on demand response for residential users. Our future work includes the investigation on multiobjective approaches to the design
of demand response programs for commercial or industrial customers and for vehicle-to-grid systems.

\appendix[Descriptions of Comparable Methods]

For modified area-load and payment minimization methods,
 the constraints in~(\ref{eq_MOP}) were addressed using the following constraint function:
\begin{equation*}
\begin{split}
 U(\bm{x})=   &   \sum_{c \in \mathcal{A}_{SF}}  \max \{ p_{c,\text{agr}}^{\min}  -         \sum_{h=s_c}^{e_c}     p_{c}^h , 0 \}                      \\
    & +    \max \{
B_{\text{ev}}^{\min}- B_{\text{ev}}^0-     \sum_{h=s_{\text{ev}}}^{e_{\text{ev}}}  p_{\text{ev}}^{h} \Delta h  ,0\}     \\
 & +   \max \{  B_{\text{ev}}^0+ \sum_{h=s_{\text{ev}}}^{e_{\text{ev}}}  p_{\text{ev}}^{h} \Delta h - B_{\text{ev}}^{\max} ,0 \} \\
 & +  \sum_{h \in \mathcal{H}} \max \{   B^h -B^{\max}  , - B^h,0 \}.  \\
\end{split}
\end{equation*}
If $U(\bm{x})>0$, then the power scheduling profile $\bm{x}$  violates the constraints and $\bm{x}$ is infeasible.

For the area-load method modified from\cite{13Kunwar}, we solved
\begin{equation}\label{eq_AL}
\begin{split}
  \underset{\bm{x}}{\min}  &\; f_{\text{cost}}(\bm{x})  + \sigma_1 U(\bm{x}) +\sigma_2 f_{\text{Penalty}}(\bm{x}) \\
    \underset{\bm{x}}{\max} & \; f_{\text{LF}}(\bm{x})- \sigma_1 U(\bm{x})\\
\end{split}
\end{equation}
where $\sigma_1$ and $\sigma_2$ represent the weighting factors of the constraint violation and penalty, respectively. Constraint handling was not discussed in~\cite{13Kunwar}, but it is essential because physical systems always induce constraints. We modified the objectives by including $U(\bm{x})$. The penalty of energy load deviating from the average load was evaluated as
\begin{equation}\label{eq_penalty}
 f_{\text{Penalty}}(\bm{x}) =   \sum_{h \in \mathcal{H} }  \mid E_{\text{total}}^h-E_{\text{av}}  \mid
\end{equation}
where
\begin{equation}\label{eq_Eav}
E_{\text{av}}=\sum_{h \in \mathcal{H} }E_{\text{total}}^h/H.
\end{equation}
The nondominated sorting genetic algorithm-II was used to solve~(\ref{eq_AL}).

For the payment minimization method modified from\cite{4914742}, we solved
\begin{equation}\label{eq_Pay}
  \underset{\bm{x}}{\min}   \; f_{\text{cost}}(\bm{x}) + \sigma_1 U(\bm{x})
\end{equation}
using particle swarm optimization.

The load factor maximization method  from\cite{11Sortomme} was modified as
\begin{equation}\label{eq_LF_max}
\begin{split}
    \underset{\bm{x}}{\max} & \; f_{\text{LF}}(\bm{x})  \\
 \text{subject to }   & (\ref{eq_cst_c}), (\ref{eq_cst_ev}), \text{ and } (\ref{eq_cst_storage})  \qquad  \forall c \in \mathcal{A}_{SF}.
\end{split}
\end{equation}
For the load variance minimization method modified from\cite{11Sortomme}, we solved
\begin{equation}\label{eq_LF_max}
\begin{split}
    \underset{\bm{x}}{\min} & \; \sum_{h \in \mathcal{H} }  ( E_{\text{total}}^h-E_{\text{av}}  )^2 /H   \\
 \text{subject to }   & (\ref{eq_cst_c}), (\ref{eq_cst_ev}), \text{ and } (\ref{eq_cst_storage})   \qquad  \forall c \in \mathcal{A}_{SF}
\end{split}
\end{equation}
where $E_{\text{av}}$ is defined in~(\ref{eq_Eav}).

\end{document}